\useunder{\uline}{\ul}{}
\begin{document}

\title{Matryoshka Representation Learning for Recommendation}


\author{Riwei Lai}
\affiliation{
  \institution{Hong Kong Baptist University}
  \city{}
  \country{}
}
\email{csrwlai@comp.hkbu.edu.hk}

\author{Li Chen}
\affiliation{
  \institution{Hong Kong Baptist University}
  \city{}
  \country{}
}
\email{lichen@comp.hkbu.edu.hk}

\author{Weixin Chen}
\affiliation{
  \institution{Hong Kong Baptist University}
  \city{}
  \country{}
}
\email{cswxchen@comp.hkbu.edu.hk}

\author{Rui Chen}
\affiliation{
  \institution{Harbin Engineering University}
  \city{}
  \country{}
}
\email{ruichen@hrbeu.edu.cn}

\begin{abstract}
Representation learning is essential for deep-neural-network-based recommender systems to capture user preferences and item features within fixed-dimensional user and item vectors. Unlike existing representation learning methods that either treat each user preference and item feature uniformly or categorize them into discrete clusters, we argue that in the real world, user preferences and item features are naturally expressed and organized in a hierarchical manner, leading to a new direction for representation learning. In this paper, we introduce a novel \textbf{matryoshka representation learning method for recommendation} (MRL4Rec), by which we restructure user and item vectors into matryoshka representations with incrementally dimensional and overlapping vector spaces to explicitly represent user preferences and item features at different hierarchical levels. We theoretically establish that constructing training triplets specific to each level is pivotal in guaranteeing accurate matryoshka representation learning. Subsequently, we propose the matryoshka negative sampling mechanism to construct training triplets, which further ensures the effectiveness of the matryoshka representation learning in capturing hierarchical user preferences and item features. The experiments demonstrate that MRL4Rec can consistently and substantially outperform a number of state-of-the-art competitors on several real-life datasets. Our code is publicly available at {\url{https://github.com/Riwei-HEU/MRL}}.
\end{abstract}

\begin{CCSXML}
<ccs2012>
<concept>
<concept_id>10002951.10003227.10003351.10003269</concept_id>
<concept_desc>Information systems~Collaborative filtering</concept_desc>
<concept_significance>500</concept_significance>
</concept>
</ccs2012>
\end{CCSXML}

\ccsdesc[500]{Information systems~Collaborative filtering}

\keywords{Recommender system, collaborative filtering, representation learning, negative sampling}


\maketitle

\section{Introduction}

\begin{figure*}[t]
  \centering
  \includegraphics[width=0.85\linewidth]{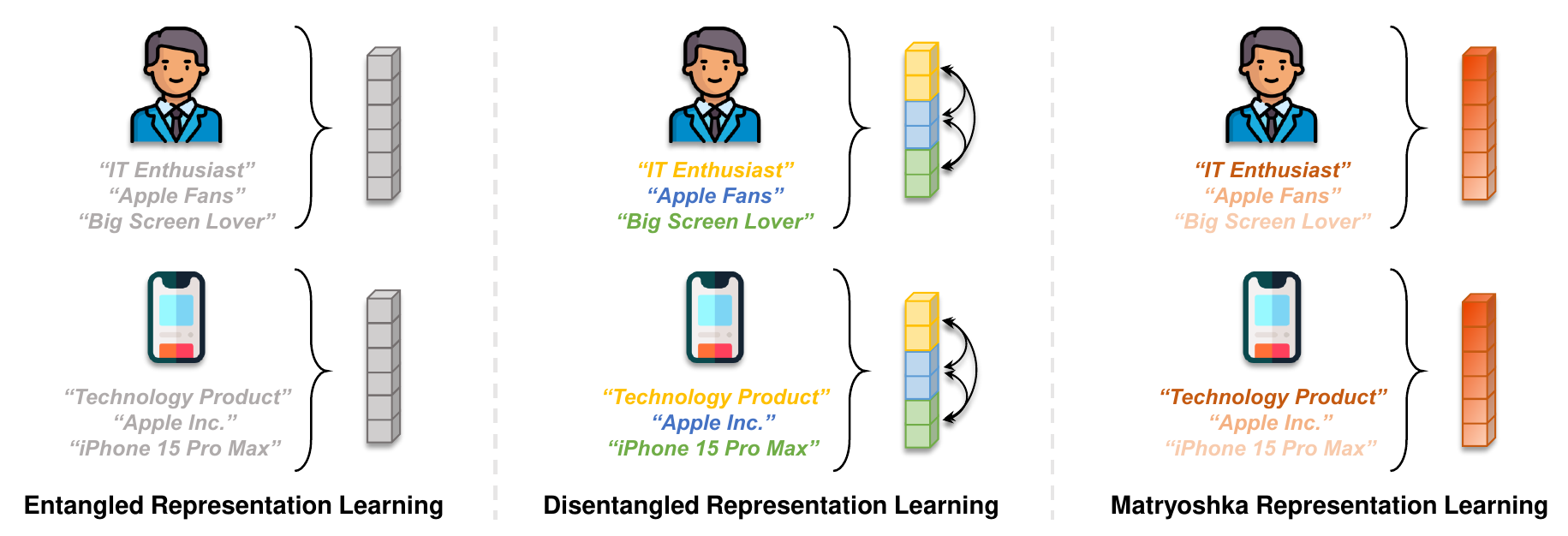}
  \caption{Three representation learning methods for DNN-based RS (best viewed in color). Matryoshka Representation Learning is our proposed method.}
  \label{fig:rl}
  \vspace{-3mm}
\end{figure*}

Recommender systems (RS) have been widely adopted to alleviate information overload in various real-world applications, such as social media~\cite{CW21}, e-commerce~\cite{WDH20}, and online advertising~\cite{ZMF19}. In recent years, empowered by the deep neural networks (DNN), RS has achieved remarkable improvements in recommendation accuracy~\cite{HDK20}, diversity~\cite{LLW19}, and explainability~\cite{LZL20}. In DNN-based RS, learned user and item representations stand as the fundamental components~\cite{WJZ20,HZC22,WYM22}. For example, DNN-based collaborative filtering (CF) expertly maps user preferences and item features into fixed-dimensional user and item vectors (\textit{a.k.a.} embeddings), and subsequently leverages these vectors to predict the ratings of uninteracted items by the users, thereby providing personalized recommendations~\cite{WHW19, HDK20}. 

Undoubtedly, the performance of recommendations is highly dependent on the quality of learned user and item representations. According to the way user preferences and item features are mapped to the embedding vectors, we can divide existing representation learning methods into two primary categories: \textit{entangled} and \textit{disentangled}. As illustrated in the left of Figure~\ref{fig:rl}, \textbf{Entangled Representation Learning} treats each user preference or item feature equally and tends to diffuse it across the entire vector~\cite{RFG09,WWG22,WYM22}. This approach ensures that the learned vector captures a broad range of user preferences or item features. However, it may potentially distribute the information in a way that would dilute the strength of individual preferences or unique features, resulting in the \textit{over-generalization} problem that the primary preference or feature may dominate the entire vector~\cite{LGL20}. For instance, in Figure~\ref{fig:rl}, the learned user vector using entangled representation learning might predominantly reflect the user's potential preference for \textsf{"Apple Fans"}, while inadvertently neglecting the user's inclination towards \textsf{"Big Screen Lover"}. 

To mitigate the over-generalization problem, \textbf{Disentangled Representation Learning} seeks to partition user preferences or item features into a number of distinct, non-overlapping clusters. Each cluster is then mapped to its own niche within a vector space~\cite{WJZ20,LCZ23,ZGL21}. As shown in the middle of Figure~\ref{fig:rl}, the vectors learned by disentangled representation learning are more adept at reflecting the multifaceted user preferences and diverse item features. However, this type of approach comes with a range of substantial \textit{constraints}, preventing it from realizing optimal representations. First, disentangled representation learning requires that there is no overlap between different user preferences or item features, which might not be valid in real situations. For example, in Figure~\ref{fig:rl}, we can see an obvious intersection between \textsf{"IT Enthusiast"} and \textsf{"Apple Fans"}, contrary to the expectation of clear-cut vector spaces. Second, the rigid structure imposed by disentangled representation learning could lead to an \textit{incomplete} (\textit{e.g.,} improperly condensing multiple preferences beyond three into only three segments) or \textit{skewed} (\textit{e.g.,} inappropriately expanding two or fewer preferences to fit three segments) understanding of user preferences or item features, resulting in suboptimal representations.

In light of the limitations inherent in entangled and disentangled representation learning methods, our objective is to develop an enhanced representation learning method that can yield higher-quality vectors of users and items. To accomplish this, it is essential to identify a better structure that can more precisely align with user preferences and item features in the real world: 1). \textit{Users typically favor expressing their preferences in a progressive manner, starting from broad, coarse-grained interests towards precise, fine-grained details}~\cite{SGL24}, as opposed to purely focusing on disparate details from the outset. 2). \textit{Items are usually organized with hierarchical features in real applications}~\cite{CYW19,WCY20}. For instance, the item shown in Figure~\ref{fig:rl} is hierarchically labeled with a broad category of \textsf{"Technology Product"}, the brand \textsf{"Apple Inc."}, and the specific model \textsf{"iPhone 15 Pro Max"}. For these reasons, we tend to formulate user preferences and item features in a hierarchical structure and map each hierarchy into vector spaces accordingly. With this objective, we propose a new representation learning method called \textbf{\underline{M}atryoshka \underline{R}epresentation \underline{L}earning} for \underline{Rec}ommendation (MRL4Rec), as shown in the right of Figure~\ref{fig:rl}. 

MRL4Rec has two major specialties: 1). \textit{It can explicitly represent hierarchical user preferences and item features in vector spaces even without prior knowledge regarding these.} Inspired by the recent work~\cite{KBR22} in machine learning, we propose to restructure user and item vectors into a series of incrementally dimensional and overlapping vector spaces that collectively resemble the nested architecture of a matryoshka, thus names matryoshka representations. The core vector space with the smallest dimension size indicates the broad outlines of user preferences or item features. As the dimension size expands, the vector space progressively becomes tailored to represent more specialized and detailed aspects. 2). \textit{It can ensure that the learned matryoshka representations are able to capture hierarchical user preferences and item features without any explicit supervision.} We broaden the scope of the entangled representation learning task (\textit{e.g.,} Bayesian personalized ranking~\cite{RFG09}) to each vector space in matryoshka representations as the additional supervision of learning user preferences and item features at that hierarchical level. We theoretically demonstrate that constructing specific training triplets tailored to each hierarchical level is necessary to ensure accurate matryoshka representation learning in capturing hierarchical user preferences and item features. Based on the analyses, we further develop a matryoshka negative sampling mechanism to construct training triplets.

We have finally conducted extensive experiments on several real-life datasets. The comparison of MRL4Rec with the state-of-the-art entangled and disentangled representation learning methods shows that it can achieve significant improvements in terms of multiple recommendation metrics.

We summarize our main contributions as follows:
\begin{itemize}[leftmargin=*]
    \item We propose a new representation learning method MRL4Rec, which captures hierarchical user preferences and item features within well-designed matryoshka representations.
    \item We design a novel matryoshka negative sampling mechanism to ensure the matryoshka representation learning process.
    \item Our experiments demonstrate the superior performance of MRL4Rec against the state-of-the-art representation learning methods.
\end{itemize}

\section{Proposed Method}

\subsection{Problem Formulation}
The primary objective of representation learning in recommender systems is to map user preferences and item features into fixed-dimensional user and item vectors. For the ease of exposition, we formulate the representation learning process in the scenario of implicit collaborative filtering. 

Let $\mathcal{U}$ and $\mathcal{I}$ be the set of users and the set of items, respectively. We denote the set of observed user-item interactions by $\mathcal{O} = \{(u, i) \mid u \in \mathcal{U}, i \in \mathcal{I}\}$, where each pair $(u, i)$ represents an interaction between user $u$ and item $i$. Normally, the pair $(u, i)$ indicates user $u$'s preferences towards item $i$, corresponding to a positive training signal. As for the negative training signal, given the pair $(u, i)$, a negative sampling strategy will identify an item $j$ that has not been previously interacted by user $u$, to form the negative pair $(u, j)$. The user and item vectors are then optimized to give positive pairs higher scores than negative pairs by the Bayesian personalized ranking (BPR) loss function~\cite{RFG09}:
\begin{equation}
    \mathcal{L}_{\mathrm{BPR}} = \sum_{(u, i, j)}  - \ln \sigma (\mathbf{e}_u^\top \mathbf{e}_i - \mathbf{e}_u^\top \mathbf{e}_j),
\end{equation}
where $\mathbf{e}_u$, $\mathbf{e}_i$, and $\mathbf{e}_j \in \mathbb{R}^d$ are the $d$-dimensional vectors of user $u$, positive item $i$, and negative item $j$, respectively, the inner product is used to measure the score of positive and negative pairs, and $\sigma (\cdot)$ is the sigmoid function.

\subsection{Matryoshka Representation Learning}
In practical scenarios, user preferences and item features are typically organized in a hierarchical manner. Take the user in Figure~\ref{fig:mrl} as an example, we can clearly observe that the user's preferences are structured according to a hierarchy of \textsf{"IT Enthusiast --> Apple Fans --> Big Screen Lover"}. To effectively capture such hierarchical patterns, it is crucial to explicitly represent them within user and item vectors that can accurately display their intricate relationships. Drawing inspiration from the recent advancements in machine learning, specifically the work of MRL~\cite{KBR22}, we propose to slice user and item vectors into a series of incrementally dimensional and overlapping vector spaces (called matryoshka representations), which enable us to explicitly represent the user preferences and item features at each level individually. For instance, as shown in Figure~\ref{fig:mrl}, the top-2 vector space of the user's vector represents the general aspects of user preferences regarding \textsf{"IT Enthusiast"}. With the increasing dimension size, the top-6 vector space becomes specialized to reflect the user's detailed preferences including \textsf{"Apple Fans"} and \textsf{"Big Screen Lover"}. 

Specifically, following the work~\cite{KBR22}, we consider a set of dimension sizes $\mathcal{D} = \{d_1, d_2, \cdots, d_L\}$, where $d_l$ represents the dimension size of the vector space at the corresponding level $l$. These sizes satisfy the condition of $d_1 < d_2 < \cdots < d_L = d$, ensuring that the dimension sizes gradually increase as we ascend through the different levels. Once we have determined the set of dimension sizes, we can proceed to formulate the sliced vector spaces as:
\begin{equation}
    \mathbf{e}_u^l =  \mathbf{e}_u [0:d_l), \quad \mathbf{e}_i^l =  \mathbf{e}_i [0:d_l),
    \label{eq:sv}
\end{equation}
where $\mathbf{e}_u^l$ and $\mathbf{e}_i^l \in \mathbb{R}^{d_l}$ are the $d_l$-dimensional vectors representing user $u$'s preferences and item $i$'s features at level $l$, respectively. Similarly, we can obtain the vector $\mathbf{e}_j^l$ for item $j$'s features at level $l$.

After acquiring user and item vectors at all levels, we further introduce the matryoshka representation learning (MRL) loss function to optimize the matryoshka representation learning process. Specifically, following the instruction in~\cite{KBR22}, we independently compute the BPR loss for each level by utilizing the corresponding user and item vectors, and subsequently integrate these losses together to optimize the representation learning process. The MRL loss is formulated as:
\begin{equation}
    \mathcal{L}_{\mathrm{MRL}} = \sum_{(u, i, j)} \sum_{l=1}^L - w_l \cdot \ln \sigma (\mathbf{e}_u^{l\top} \mathbf{e}_i^l - \mathbf{e}_u^{l\top} \mathbf{e}_j^l),
\end{equation}
where $w_l \geq 0$ denotes the weight of the loss at level $l$. It can be treated as a hyperparameter to be tuned manually, or as a model parameter to be optimized automatically. In our experiments, we set $w_l$ uniformly as $1/L$ to keep the MRL loss simplicity.

Although MRL loss demonstrates impressive results on various fully supervised representation learning tasks in the original paper~\cite{KBR22}, in the context of recommender systems, it still remains uncertain whether the learned matryoshka representations of users and items could correctly capture hierarchical user preferences and item features. Therefore, we conduct in-depth analyses on $\mathcal{L}_{\mathrm{MRL}}$ and demonstrate that this implementation \textit{cannot} guarantee matryoshka representation learning with hierarchical structures. The key idea is to identify whether there are disparities between $\mathcal{L}_{\mathrm{BPR}}$ and $\mathcal{L}_{\mathrm{MRL}}$ in the direction of vector optimization. Detailed proof processes are as follows:

\begin{figure}[t]
  \centering
  \includegraphics[width=\linewidth]{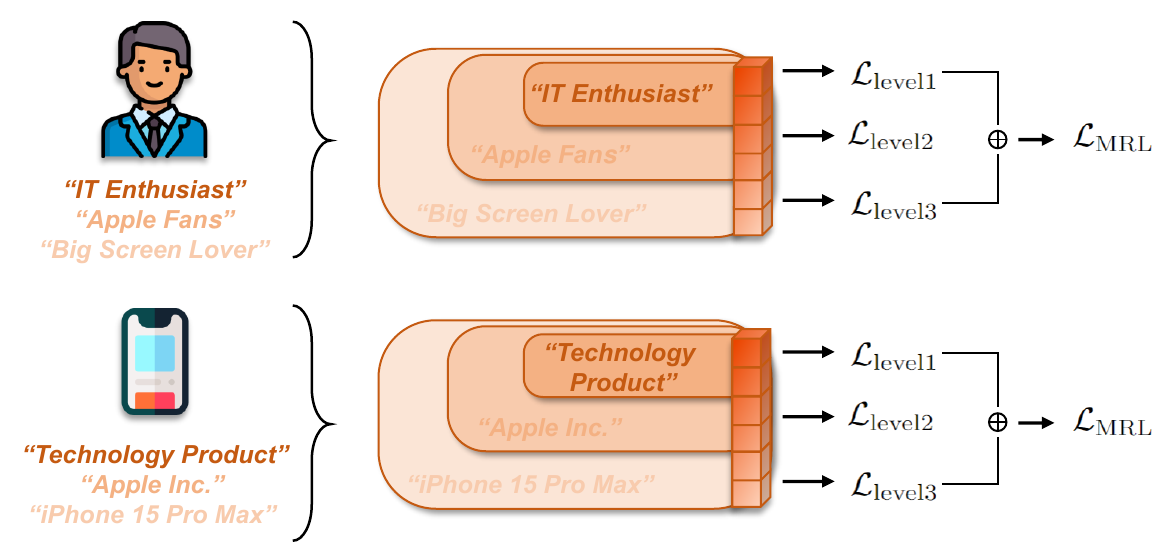}
  \caption{An illustration of matryoshka representation learning process.}
  \label{fig:mrl}
  \vspace{-3mm}
\end{figure}

\begin{theorem}
The representations learned by $\mathcal{L}_{\mathrm{MRL}}$ and $\mathcal{L}_{\mathrm{BPR}}$ exhibit an identical structure.
\label{thm:c1}
\end{theorem}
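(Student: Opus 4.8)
The plan is to follow the hint and compare the two objectives through the direction in which they drive each coordinate of the embeddings under gradient descent. I would fix a single training triplet $(u,i,j)$ and differentiate both losses with respect to the full vectors $\mathbf{e}_u,\mathbf{e}_i,\mathbf{e}_j$, tracking the computation coordinate by coordinate. Writing $x_{uij}=\mathbf{e}_u^\top\mathbf{e}_i-\mathbf{e}_u^\top\mathbf{e}_j$ and using $\frac{d}{dx}(-\ln\sigma(x))=-(1-\sigma(x))$, the BPR gradient with respect to the $k$-th coordinate of $\mathbf{e}_u$ is $-(1-\sigma(x_{uij}))(e_{i,k}-e_{j,k})$, with analogous expressions for $\mathbf{e}_i$ and $\mathbf{e}_j$. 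Hence every coordinate of the user vector is pushed along $e_{i,k}-e_{j,k}$ with a single common positive coefficient.

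Next I would carry out the same computation for $\mathcal{L}_{\mathrm{MRL}}$. The crucial structural fact is that, because $\mathbf{e}_u^l=\mathbf{e}_u[0{:}d_l)$, coordinate $k$ contributes to the level-$l$ term if and only if $d_l>k$. Differentiating term by term gives $\partial\mathcal{L}_{\mathrm{MRL}}/\partial e_{u,k}=-(e_{i,k}-e_{j,k})\sum_{l:\,d_l>k}w_l(1-\sigma(x_{uij}^l))$, where $x_{uij}^l=\mathbf{e}_u^{l\top}\mathbf{e}_i^l-\mathbf{e}_u^{l\top}\mathbf{e}_j^l$. Defining the per-coordinate coefficient $C_k=\sum_{l:\,d_l>k}w_l(1-\sigma(x_{uij}^l))$, I would observe that $C_k\ge 0$ (since every $w_l\ge 0$ and $1-\sigma(\cdot)\in(0,1)$) and that the MRL gradient is exactly $C_k$ times the vector $-(e_{i,k}-e_{j,k})$, i.e., a strictly positive, coordinate-dependent rescaling of the BPR gradient direction. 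The same holds for the item vectors.

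From here the conclusion follows by noting that a positive coordinate-wise rescaling of the gradient never flips the sign of any coordinate's update: for the shared triplet, both losses move each coordinate of $\mathbf{e}_u$, $\mathbf{e}_i$, and $\mathbf{e}_j$ in identical directions, with $\mathcal{L}_{\mathrm{MRL}}$ differing only by an effective per-coordinate step size $C_k$. Since the same positive item $i$ and negative item $j$ drive every level, no level receives a distinct learning signal that would let it specialize; the element-wise optimization direction is indistinguishable from that of $\mathcal{L}_{\mathrm{BPR}}$, which is precisely the sense in which the learned representations share an identical structure.

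I expect the main obstacle to be pinning down the right notion of \emph{identical structure} and isolating the single-triplet analysis so that the comparison is exact. Within that, the delicate bookkeeping is the nested summation $\sum_{l:\,d_l>k}$: one must correctly account for the fact that low-index (coarse) coordinates appear in more levels and therefore accumulate a larger coefficient $C_k$ than high-index (fine) coordinates, and then verify that this monotone reweighting, being everywhere nonnegative, cannot introduce any sign change or level-specific differentiation. Establishing that the reweighting is purely a magnitude effect and not a directional one is the heart of the argument, and is what ultimately shows that $\mathcal{L}_{\mathrm{MRL}}$ buys no new hierarchical structure over $\mathcal{L}_{\mathrm{BPR}}$.
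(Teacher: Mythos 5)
Your proposal is correct and rests on the same underlying idea as the paper's proof: compare the gradients induced by the two losses for a shared triplet $(u,i,j)$ and show that they differ only by a positive scalar factor per component, so the optimization direction---and hence the learned structure---is the same. The execution differs in the decomposition. The paper slices the embedding into the disjoint blocks $[d_{l-1},d_l)$, differentiates with respect to the block-level inner products $r_{u,i}^{l-1,l}$, and isolates each block by freezing the remaining blocks at their initial values, so its clean coefficient comparison ($L$ for $\mathcal{L}_{\mathrm{BPR}}$ versus $L-l+1$ for $\mathcal{L}_{\mathrm{MRL}}$) holds only under that approximation. You instead differentiate exactly, coordinate by coordinate, obtaining $\partial\mathcal{L}_{\mathrm{MRL}}/\partial e_{u,k}=-C_k\,(e_{i,k}-e_{j,k})$ with $C_k=\sum_{l:\,d_l>k}w_l\,(1-\sigma(x_{uij}^l))\ge 0$, against $-(1-\sigma(x_{uij}))(e_{i,k}-e_{j,k})$ for BPR; no freezing is required, and the monotone accumulation of $C_k$ over coarse coordinates recovers the paper's $L-l+1$ counting as the special case in which all level scores are set to their (zero) initial values. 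Your version is the more rigorous of the two---the per-coordinate derivative is exact and honestly exposes that the rescaling factor varies with the current parameters, yet remains positive---while the paper's block-level version makes the magnitude contrast more visually explicit. Both arguments share the same caveat, which you rightly flag: ``identical structure'' is established only in the informal sense that, triplet by triplet, the update direction is preserved under a positive reweighting; neither argument addresses what happens when contributions are aggregated across a mini-batch. That caveat applies equally to the paper, so it is not a gap in your proposal relative to the original.
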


\begin{proof}
Consider the training triplet $(u, i, j)$, $\mathbf{e}_u$, $\mathbf{e}_i$, and $\mathbf{e}_j \in \mathbb{R}^d$ are the $d$-dimensional vectors of user $u$, positive item $i$, and negative item $j$, respectively. Let $x,y \in \mathbb{N}, x < y \leq L$, we define $\mathbf{e}_*^{x,y}$ as the sliced vector space that spans from dimension $d_x$ to $d_y$, where $*$ could be $u$, $i$, or $j$:
\begin{equation}
    \mathbf{e}_*^{x,y} =  \mathbf{e}_* [d_x:d_y).
    \label{eq:d12}
\end{equation}
In particular, we set $d_0 = 0$. In this case, we simplify $\mathbf{e}_*^{0,y}$ as:
\begin{equation}
    \mathbf{e}_*^{y} =  \mathbf{e}_* [0:d_y).
    \label{eq:d02}
\end{equation}

Given $\mathbf{e}_u^{x,y}$ and $\mathbf{e}_i^{x,y}$, we then define $r_{u,i}^{x,y}$ as the inner product between them:
\begin{equation}
    r_{u,i}^{x,y} =  \mathbf{e}_u^{x,y\top} \mathbf{e}_i^{x,y}.
\end{equation}

For the inner product between $\mathbf{e}_u^{y}$ and $\mathbf{e}_i^{y}$, similarly we have:
\begin{equation}
    r_{u,i}^{y} =  \mathbf{e}_u^{y\top} \mathbf{e}_i^{y}.
    \label{eq:rui}
\end{equation}

According to Equation~(\ref{eq:d12})-(\ref{eq:rui}) and the definition of inner product, we can rewrite $\mathcal{L}_{\mathrm{BPR}}$ as:
\begin{align}
    \mathcal{L}_{\mathrm{BPR}} &= \sum_{(u, i, j)} - \ln \sigma (\mathbf{e}_u^\top \mathbf{e}_i - \mathbf{e}_u^\top \mathbf{e}_j) = \sum_{(u, i, j)}  - \ln \sigma (r_{u,i}^{L} - r_{u,j}^{L}) \notag \\
    &= \sum_{(u, i, j)} - \ln \sigma ((r_{u,i}^{1} + \cdots + r_{u,i}^{L-1,L}) - (r_{u,j}^{1} + \cdots + r_{u,j}^{L-1,L})) \notag \\
    &= \sum_{(u, i, j)} - \ln \sigma ((r_{u,i}^{1} - r_{u,j}^{1}) + \cdots + (r_{u,i}^{L-1,L} - r_{u,j}^{L-1,L})) \notag \\
    &= \sum_{(u, i, j)} - \frac{1}{L} \cdot L \cdot \ln \sigma ((r_{u,i}^{1} - r_{u,j}^{1}) + \cdots + (r_{u,i}^{L-1,L} - r_{u,j}^{L-1,L})).
\end{align}

Consider a specific vector space, for example, $\mathbf{e}_u^{l-1,l}$, we can obtain the optimization effect of $\mathcal{L}_{\mathrm{BPR}}$ on this vector by taking its partial derivative with respect to $r_{u,i}^{l-1,l}$:
\begin{align}    
    \frac{\partial \mathcal{L}_{\mathrm{BPR}}}{\partial r_{u,i}^{l-1,l}} = \frac{\partial (- \frac{1}{L} \cdot L \cdot \ln \sigma ((r_{u,i}^{1} - r_{u,j}^{1}) + \cdots + (r_{u,i}^{L-1,L} - r_{u,j}^{L-1,L})))}{\partial r_{u,i}^{l-1,l}}.
    \label{eq:pd}
\end{align}

For simplicity, we freeze the remaining vector spaces by keeping them as initialized values throughout the learning process, therefore, we can update Equation~(\ref{eq:pd}) as:
\begin{align}
    \frac{\partial \mathcal{L}_{\mathrm{BPR}}}{\partial r_{u,i}^{l-1,l}} &\approx \frac{\partial (- \frac{1}{L} \cdot L \cdot \ln \sigma (0 + \cdots (r_{u,i}^{l-1,l} - r_{u,j}^{l-1,l}) + \cdots))}{\partial r_{u,i}^{l-1,l}} \notag \\
    & = \frac{\partial (- \frac{1}{L} \cdot L \cdot \ln \sigma (r_{u,i}^{l-1,l} - r_{u,j}^{l-1,l}))}{\partial r_{u,i}^{l-1,l}} \notag \\
    & = - \frac{1}{L} \cdot L \cdot (1 - \sigma (r_{u,i}^{l-1,l} - r_{u,j}^{l-1,l})).
    \label{eq:bprpd}
\end{align}

Similarly, according to Equation~(\ref{eq:d12})-(\ref{eq:rui}) and the definition of inner product, we can rewrite $\mathcal{L}_{\mathrm{MRL}}$ as:
\begin{align}
    \mathcal{L}_{\mathrm{MRL}} &= \sum_{(u, i, j)} \sum_{l=1}^L - \frac{1}{L} \cdot \ln \sigma (\mathbf{e}_u^{l\top} \mathbf{e}_i^l - \mathbf{e}_u^{l\top} \mathbf{e}_j^l) \notag \\
    &=  \sum_{(u, i, j)} \sum_{l=1}^L - \frac{1}{L} \cdot \ln \sigma (r_{u,i}^{l} - r_{u,j}^{L}) \notag \\
    &= \sum_{(u, i, j)} - \frac{1}{L} \cdot (\ln \sigma (r_{u,i}^{1} - r_{u,j}^{1}) + \cdots + \ln \sigma(r_{u,i}^{L} - r_{u,i}^{L})) \notag \\
    &= \sum_{(u, i, j)} - \frac{1}{L} \cdot (\ln \sigma (r_{u,i}^{1} - r_{u,j}^{1}) + \cdots + \ln \sigma ((r_{u,i}^{1} - r_{u,j}^{1}) \notag \\
    &\qquad \qquad \qquad + \cdots + (r_{u,i}^{L-1,L} - r_{u,j}^{L-1,L}))).
    \label{eq:rwmrl}
\end{align}

To get the optimization effect of $\mathcal{L}_{\mathrm{MRL}}$ on the same vector $\mathbf{e}_u^{l-1,l}$, we freeze the remaining vector spaces and calculate the partial derivative of $\mathcal{L}_{\mathrm{MRL}}$ with respect to $r_{u,i}^{l-1,l}$:
\begin{align}
    &\frac{\partial \mathcal{L}_{\mathrm{MRL}}}{\partial r_{u,i}^{l-1,l}} \approx \frac{\partial (- \frac{1}{L} \cdot (\ln \sigma (0) + \cdots \ln \sigma (0 + \cdots (r_{u,i}^{l-1,l} - r_{u,j}^{l-1,l}) + \cdots))}{\partial r_{u,i}^{l-1,l}} \notag \\
    & = - \frac{1}{L} \cdot (0 + \cdots \underbrace{(1 - \sigma (r_{u,i}^{l-1,l} - r_{u,j}^{l-1,l})) + \cdots (1 - \sigma (r_{u,i}^{l-1,l} - r_{u,j}^{l-1,l}))}_{L-l+1})\notag \\
    & = - \frac{1}{L} \cdot (L-l+1) \cdot (1 - \sigma (r_{u,i}^{l-1,l} - r_{u,j}^{l-1,l})).
    \label{eq:mrlpd}
\end{align}

Comparing Equation~(\ref{eq:bprpd}) and Equation~(\ref{eq:mrlpd}), we can observe that given the same training triplet $(u, i, j)$, the optimization effect of $\mathcal{L}_{\mathrm{BPR}}$ and $\mathcal{L}_{\mathrm{MRL}}$ on the identical vector space differs only in terms of magnitude, which is corresponding to the coefficient (\textit{i.e.,} $L$ and $L-l+1$); while yields no disparity in the direction, which is corresponding to the triplet $(u, i, j)$ (\textit{i.e.,} $r_{u,i}^{l-1,l} - r_{u,j}^{l-1,l}$). Therefore, the representations learned by $\mathcal{L}_{\mathrm{MRL}}$ and $\mathcal{L}_{\mathrm{BPR}}$ exhibit an identical structure -- the above completes the proof.
\end{proof}

\subsection{Matryoshka Negative Sampling}

\begin{algorithm}[t]
    \caption{Matryoshka Negative Sampling (MNS)}
    \KwIn{Set of observed interactions $\mathcal{O} = \{(u, i)\}$, set of dimension sizes $\mathcal{D} = \{d_1, d_2, \cdots, d_L\}$, hard negative sampling strategy $f$.}
    \KwOut{Set of training tuples $\mathcal{T} = \{(u, i, j_1, \cdots, j_L)\}$.}
     Initialize an empty set $\mathcal{T} = \{\}$ for training tuples. \\

    \For{$each ~ (u, i) \in \mathcal{O}$}
    {
        Get the vectors $\mathbf{e}_u$, $\mathbf{e}_i$ of $u$, $i$. \\
        \For{$each ~ d_l \in \mathcal{D}$}
        {   Get the sliced vectors $\mathbf{e}_u^l$, $\mathbf{e}_i^l$ of $u$, $i$ by Equation~(\ref{eq:sv}). \\
            Sample an item $j_l$ by $f$ considering $\mathbf{e}_u^l$, $\mathbf{e}_i^l$. \\
            \eIf{$l = 1$}
            {
                Add $j_l$ to $(u, i)$.
            }{
            Add $j_l$ to $(u, i, \cdots, j_{l-1})$.
            }
        }
    Add $(u, i, j_1, \cdots, j_L)$ to $\mathcal{T}$. \\
    }
\label{al:mns}
\end{algorithm}

Based on the theoretical analyses conducted, in order to achieve matryoshka representation learning with hierarchical structures, it is essential to construct specific training triplets tailored to each hierarchical level. Therefore, we further propose the matryoshka negative sampling (MNS) mechanism for matryoshka representation learning, whose entire procedure is detailed in Algorithm~\ref{al:mns}. 

Specifically, we identify an existing hard negative sampling strategy $f$ as the foundational negative sampling method. And for each observed user-item interaction $(u,i)$, we utilize $f$ to select the negative item $j_l$ corresponding to level $l$ by taking as input the sliced user and item vectors $\mathbf{e}_u^l$ and $\mathbf{e}_i^l$. After sampling negative items for all levels, we obtain the final tuple $(u, i, j_1, \cdots, j_L)$ for training. Consequently, the MRL loss with MNS (MRL4Rec) can be formulated as follows:
\begin{equation}
    \mathcal{L}_{\mathrm{MRL_{w/MNS}}} = \sum_{(u, i, j_1, \cdots, j_L)} \sum_{l=1}^L - w_l \cdot \ln \sigma (\mathbf{e}_u^{l\top} \mathbf{e}_i^l - \mathbf{e}_u^{l\top} \mathbf{e}_{j_l}^l).
\end{equation}
Considering both the effectiveness and efficiency, we choose the simplest hard negative sampling strategy, dynamic negative sampling (DNS)~\cite{ZCW13}, as $f$ in our experiments.

Moreover, we can get the optimization effect of $\mathcal{L}_{\mathrm{MRL_{w/MNS}}}$ on the vector $\mathbf{e}_u^{l-1,l}$ by conducting similar calculations as Equation~(\ref{eq:rwmrl}) and (\ref{eq:mrlpd}). Finally, we have the partial derivative of $\mathcal{L}_{\mathrm{MRL_{w/MNS}}}$ with respect to $r_{u,i}^{l-1,l}$, \textit{i.e.,} $\frac{\partial \mathcal{L}_{\mathrm{MRL_{w/MNS}}}}{\partial r_{u,i}^{l-1,l}}$, as follows:

$- \frac{1}{L} \cdot (0 + \cdots \underbrace{(1 - \sigma (r_{u,i}^{l-1,l} - r_{u,j_l}^{l-1,l})) + \cdots (1 - \sigma (r_{u,i}^{l-1,l} - r_{u,j_L}^{l-1,l}))}_{L-l+1})$
It is clear that benefit from the specific training triplet tailored to each individual level, there are significant disparities between $\mathcal{L}_{\mathrm{BPR}}$ and $\mathcal{L}_{\mathrm{MRL_{w/MNS}}}$ in the direction of vector optimization. In addition, vectors at the higher level (\textit{i.e.,} smaller $l$) will receive more optimization effects, motivating them to learn more generalized representations, which is in line with our expectations.

\section{Experiments}
In this section, we perform extensive experiments to evaluate our proposed MRL4Rec and MNS. Specifically, we try to answer the following research questions:
\begin{itemize}[leftmargin=*]
    \item \textbf{RQ1:} How does MRL4Rec perform compared with previous entangled and disentangled representation learning methods?
    \item \textbf{RQ2:} Does MNS benefit the matryoshka representation learning with hierarchical structures experimentally?
    \item \textbf{RQ3:} Can representations learned by MRL4Rec capture hierarchical user preferences and item features?
    \item \textbf{RQ4:} What are the impacts of the number of hierarchical levels?
\end{itemize}

\subsection{Experimental Setup}
\subsubsection{Datasets and Evaluation Metrics.}
We consider three widely used real-life benchmark datasets in experiments: \textbf{Amazon-Phones}, \textbf{Amazon-Sports}, and \textbf{Amazon-Tools}. These datasets are adopted from the Amazon review dataset\footnote{https://jmcauley.ucsd.edu/data/amazon/} with different categories, \textit{e.g.,} phones and sports, containing ratings, reviews, and helpfulness votes of real users from online shopping platform Amazon~\cite{HM16}. We only use the review data and treat each review as an implicit user-item interaction. Following~\cite{HDK20, SCF23}, we randomly split each user’s interactions into training/test sets with a ratio of 80\%/20\%, and build the validation set by randomly sampling 10\% interactions of the training set. Table~\ref{tab:data} summarizes the statistics of the three datasets. We report the recommendation performances in terms of $Recall@K$ and $NDCG@K$, with $K = \{10, 20, 50\}$, where higher values indicate better performances~\cite{WHW19}.

\subsubsection{Baseline Methods.}
We compare our methods with a wide range of representative representation learning (RL) methods:

\noindent \textbf{(i) Entangled RL Methods:}
\begin{itemize}[leftmargin=*]
    \item \textbf{BCE}~\cite{HKV08}: Binary cross-entropy (BCE) loss is a widely used loss function for binary classification tasks. In recommender systems, it aims to refine user and item representations by optimizing their capacity to predict binary user-item interactions effectively.
    \item \textbf{BPR}~\cite{RFG09}: Bayesian personalized ranking (BPR) loss is a typical ranking-based loss function in recommender systems. It aims to refine user and item representations by maximizing the difference between the predicted ratings for positive and negative items.
    \item \textbf{BPR-D}~\cite{ZCW13}: BPR-D constructs training triplets by utilizing the dynamic negative sampling (DNS) strategy, and optimizes the representation learning process through BPR loss.
    \item \textbf{SSM}~\cite{WWG22}: Sampled softmax (SSM) loss is an efficient alternative to traditional softmax loss. It leverages a subset of sampled negative items instead of considering all items in training to preserve the desired ranking property while reducing the training cost.
    \item \textbf{UIB}~\cite{ZZY22}: UIB is a hybrid loss function combining the BCE loss and BPR loss through learning explicit user interest boundaries.
    \item \textbf{DAU}~\cite{WYM22}: DirectAU (DAU) directly optimizes the alignment and uniformity properties of user and item representations, ensuring that these representations are both coherent and evenly distributed in the embedding space.
\end{itemize}

\noindent \textbf{(ii) Disentangled RL Methods:}
\begin{itemize}[leftmargin=*]
    \item \textbf{MVAE}~\cite{MZC19}: MacridVAE (MVAE) aims to learn disentangled representations based on user behavior data. It explicitly models the separation of macro and micro factors, and performs disentanglement at each level.
    \item \textbf{DGCF}~\cite{WJZ20}: DGCF considers user-item relationships at the finer granularity of user intents. It generates an intent-aware graph by modeling a distribution over intents for each interaction and effectively disentangles these intents in the representations of users and items.
    \item \textbf{DENS}~\cite{LCZ23}: DENS disentangles relevant and irrelevant factors of items with contrastive learning and considers these factors to identify the optimal negative items to learn more informative representations of users and items.
    \item \textbf{DCCF}~\cite{RXZ23}: DCCF utilizes disentangled contrastive learning to explore latent factors underlying implicit intents for user-item interactions. It introduces a graph structure learning layer that enables adaptive interaction augmentation based on learned disentangled user and item intent-aware dependencies.
\end{itemize}

We also introduce several variants of our methods utilized in the experiments:

\noindent \textbf{(iii) Matryoshka RL Methods:}
\begin{itemize}[leftmargin=*]
    \item \textbf{BPR-M}: BPR-M incorporates the proposed matryoshka negative sampling (MNS) mechanism to construct training tuples and then optimizes the representation learning process by BPR loss.
    \item \textbf{MRL-D}: MRL-D adopts DNS to construct training triplets and optimizes the representation learning process by MRL loss. 
    \item \textbf{MRL4Rec}: MRL4Rec, also denoted as \textbf{MRL-M}, utilizes MNS to construct training tuples and optimizes the representation learning process by MRL loss.
\end{itemize}

\begin{table}[t]
\centering
\caption{Statistics of the datasets used in the experiments.}
\begin{tabular}{@{}lcccc@{}}
\toprule
\textbf{Dataset} & \textbf{\#Users} & \textbf{\#Items} & \textbf{\#Inters} & \textbf{Density} \\ \midrule
Amazon-Phones & 27,879 & 10,429 & 194,439 & 0.067\% \\ 
Amazon-Sports & 35,598 & 18,357 & 296,337 & 0.045\% \\ 
Amazon-Tools & 16,638 & 10,217 & 134,476 & 0.079\%\\ \bottomrule
\end{tabular}
\label{tab:data}
\end{table}

\subsubsection{Implementation Details.}
We follow the experimental setting in DirectAU~\cite{WYM22} and utilize matrix factorization (MF) as our base recommendation model. For methods utilizing graph neural networks, we set the number of graph layers to 1. The dimension size of user and item vectors is fixed to 64, and all the vectors are initialized with the Xavier method. We optimize the representation learning process with Adam~\cite{KB15} and use the default learning rate of 0.001 and default mini-batch size of 2,048. The number of training epochs is set to 300 and we use the early stopping mechanism to prevent overfitting. We set the set of dimension sizes $\mathcal{D}$ as $\{4, 8, 16, 32, 64\}$ and $w_l$ as $1/5$ for MRL loss. For MNS, we adopt DNS as the basic negative sampling strategy $f$. For all baseline methods, we employ the recommended settings of hyperparameters to maintain consistency and ensure fair comparison.

\begin{table*}[t]
\centering
\caption{Performances of MRL-M (MRL4Rec) and main baseline methods. The best results are in bold, and the second best are underlined. Improvements are calculated over the best baseline method and statistically significant with $p\text{-value} < 0.01$.}
\begin{tabular}{@{}cc*{10}{>{\centering\arraybackslash}p{1cm}}@{}}
\toprule
\multirow{2}{*}{\textbf{Dataset}} & \multirow{2}{*}{\textbf{Metric}} & \multicolumn{5}{c}{\textbf{Entangled RL Method}} & \multicolumn{4}{c}{\textbf{Disentangled RL Method}} & \textbf{Ours} \\ \cmidrule(l){3-7} \cmidrule(l){8-11} \cmidrule(l){12-12} & & BCE & BPR & SSM & UIB & DAU & MVAE & DGCF & DENS & DCCF & MRL-M  \\ \midrule
\multirow{6}{*}{Amazon-Phones} 
& $Recall@10$ & 0.0730 & 0.0760 & 0.0787 & 0.0708  & 0.0791 & 0.0888 & 0.0813 & 0.0851 & \textbf{0.0946} & {\ul 0.0906} \\
& $Recall@20$ & 0.1082 & 0.1106 & 0.1137 & 0.1042  & 0.1156 & 0.1237 & 0.1174 & 0.1216 & \textbf{0.1366} & {\ul 0.1286} \\
& $Recall@50$ & 0.1668 & 0.1695 & 0.1731 & 0.1625  & 0.1802 & 0.1844 & 0.1803 & 0.1849 & \textbf{0.2031} & {\ul 0.1920} \\ \cmidrule(l){2-12}
& $NDCG@10$   & 0.0471 & 0.0497 & 0.0512 & 0.0440  & 0.0485 & 0.0589 & 0.0530 & 0.0562 & \textbf{0.0619} & {\ul 0.0601} \\ 
& $NDCG@20$   & 0.0578 & 0.0598 & 0.0613 & 0.0538  & 0.0593 & 0.0692 & 0.0635 & 0.0668 & \textbf{0.0742} & {\ul 0.0710} \\ 
& $NDCG@50$   & 0.0715 & 0.0746 & 0.0748 & 0.0673  & 0.0742 & 0.0832 & 0.0781 & 0.0813 & \textbf{0.0897} & {\ul 0.0858} \\ \midrule
\multirow{6}{*}{Amazon-Sports}
& $Recall@10$ & 0.0440 & 0.0447 & 0.0483 & 0.0392  & 0.0492 & 0.0454 & 0.0534 & 0.0530 & {\ul 0.0538} & \textbf{0.0582} \\
& $Recall@20$ & 0.0652 & 0.0673 & 0.0708 & 0.0605  & 0.0723 & 0.0639 & 0.0786 & 0.0790 & {\ul 0.0792} & \textbf{0.0845} \\
& $Recall@50$ & 0.1085 & 0.1125 & 0.1164 & 0.1022  & 0.1167 & 0.0996 & 0.1280 & 0.1259 & {\ul 0.1288} & \textbf{0.1318} \\ \cmidrule(l){2-12}
& $NDCG@10$   & 0.0283 & 0.0291 & 0.0313 & 0.0248  & 0.0321 & 0.0322 & 0.0350 & 0.0353 & {\ul 0.0354} & \textbf{0.0388} \\ 
& $NDCG@20$   & 0.0343 & 0.0360 & 0.0380 & 0.0314  & 0.0391 & 0.0378 & 0.0426 & {\ul 0.0435} & 0.0431 & \textbf{0.0469} \\
& $NDCG@50$   & 0.0441 & 0.0468 & 0.0487 & 0.0413  & 0.0498 & 0.0463 & 0.0544 & {\ul 0.0550} & {\ul 0.0550} & \textbf{0.0581} \\ \midrule
\multirow{6}{*}{Amazon-Tools}
& $Recall@10$ & 0.0381 & 0.0383 & 0.0398 & 0.0353  & 0.0451 & 0.0386 & {\ul 0.0501} & 0.0485 & 0.0464 & \textbf{0.0509} \\
& $Recall@20$ & 0.0559 & 0.0553 & 0.0572 & 0.0528  & 0.0673 & 0.0554 & {\ul 0.0717} & 0.0684 & 0.0677 & \textbf{0.0734} \\
& $Recall@50$ & 0.0883 & 0.0869 & 0.0908 & 0.0843  & 0.1035 & 0.0838 & \textbf{0.1122} & 0.1022 & 0.1036 & {\ul 0.1115} \\ \cmidrule(l){2-12}
& $NDCG@10$   & 0.0243 & 0.0247 & 0.0259 & 0.0225  & 0.0295 & 0.0267 & {\ul 0.0329} & 0.0312 & 0.0301 & \textbf{0.0339} \\ 
& $NDCG@20$   & 0.0296 & 0.0299 & 0.0310 & 0.0277  & 0.0360 & 0.0318 & {\ul 0.0394} & 0.0373 & 0.0365 & \textbf{0.0404} \\
& $NDCG@50$   & 0.0376 & 0.0375 & 0.0393 & 0.0354  & 0.0446 & 0.0385 & {\ul 0.0489} & 0.0455 & 0.0450 & \textbf{0.0491} \\ \bottomrule
\end{tabular}
\label{tab:per}
\end{table*}

\subsection{Performance Comparison (RQ1)}
We present the performance results of our proposed MRL-M and main baseline methods in Table~\ref{tab:per}, where the best results are boldfaced and the second-best results are underlined. Our observations are as follows:

Overall, compared with all baseline methods, MRL-M yields considerable performance on three datasets and across all recommendation metrics. Specifically, on the Amazon-Sports dataset, its relative improvements over the strongest baselines are 8.18\%, 6.69\% and 2.34\% in terms of $Recall@10$, $Recall@20$ and $Recall@50$, respectively, and 9.60\%, 8.82\% and 5.64\% in terms of $NDCG@10$, $NDCG@20$ and $NDCG@50$, respectively. This demonstrates that MRL-M is capable of learning higher-quality user and item representations and thus achieves better recommendation results. We attribute such improvements to the capacity of MRL-M to effectively capture hierarchical user preferences and item features.

Among the baseline methods, it is clearly observable that disentangled representation learning methods demonstrate competitive performance, whereas entangled representation learning methods perform relatively poorly. This observation is indeed consonant with the intuition that, in the real world, users typically exhibit a diverse range of preferences, and disentangled representation learning methods stand out in capturing these multifaceted preferences, thus yielding superior recommendation results compared with entangled representation learning methods.

In comparison to disentangled representation learning methods, such as DGCF and DCCF, MRL-M does \textit{not} exhibit significant advantages in datasets like Amazon-Phones and Amazon-Tools. We speculate that this may be attributed to the fact that these two methods are built upon more advanced graph neural networks, which contribute to their superior performance. To more accurately evaluate the effectiveness of MRL-M in representation learning, we migrate MRL-M from its original MF model to the 1-layer LightGCN~\cite{HDK20}. Our experiments on the Amazon-Phone dataset reveal that MRL-M achieves 0.1051, 0.1502, 0.2262 in terms of $Recall@10$, $ Recall@20$, and $Recall@50$, respectively, and 0.0695, 0.0827, 0.1004 in terms of $NDCG@10$, $NDCG@20$, and $NDCG@50$, respectively. When compared to DCCF, it becomes evident that MRL-M performs consistently better. This finding verifies the effectiveness of MRL-M and justifies our initial motivation that learning representations with hierarchical structures aligns more closely with user preferences and item features.

\begin{table*}[t]
\centering
\caption{Performances of BPR-D, BPR-M, MRL-D and MRL-M.}
\begin{tabular}{@{}cccccccccccccc@{}}
\toprule
\multirow{2}{*}{\textbf{Metric}} & \multicolumn{4}{c}{\textbf{Amazon-Phones}} & \multicolumn{4}{c}{\textbf{Amazon-Sports}} & \multicolumn{4}{c}{\textbf{Amazon-Tools}} \\ \cmidrule(l){2-5} \cmidrule(l){6-9} \cmidrule(l){10-13} & BPR-D & BPR-M & MRL-D & MRL-M & BPR-D & BPR-M & MRL-D & MRL-M & BPR-D & BPR-M & MRL-D & MRL-M \\ \midrule
$Recall@10$ & 0.0840 & 0.0775 & 0.0841 & 0.0906 & 0.0527 & 0.0499 & 0.0523 & 0.0582 & 0.0470 & 0.0409 & 0.0479 & 0.0509 \\
$Recall@20$ & 0.1208 & 0.1099 & 0.1211 & 0.1286 & 0.0774 & 0.0721 & 0.0766 & 0.0845 & 0.0666 & 0.0592 & 0.0680 & 0.0734 \\
$Recall@50$ & 0.1793 & 0.1647 & 0.1833 & 0.1920 & 0.1220 & 0.1134 & 0.1230 & 0.1318 & 0.0977 & 0.0901 & 0.1017 & 0.1115 \\  \midrule
$NDCG@10$   & 0.0556 & 0.0510 & 0.0552 & 0.0601 & 0.0350 & 0.0334 & 0.0351 & 0.0388 & 0.0312 & 0.0267 & 0.0316 & 0.0339 \\ 
$NDCG@20$   & 0.0664 & 0.0602 & 0.0660 & 0.0710 & 0.0425 & 0.0401 & 0.0423 & 0.0469 & 0.0378 & 0.0322 & 0.0377 & 0.0404 \\ 
$NDCG@50$   & 0.0799 & 0.0730 & 0.0804 & 0.0858 & 0.0532 & 0.0500 & 0.0537 & 0.0581 & 0.0452 & 0.0395 & 0.0457 & 0.0491 \\ \bottomrule
\end{tabular}
\label{tab:ab}
\end{table*}

\begin{figure*}[t]
  \centering
  \hspace{\fill}
  \subfigure{\includegraphics[width=0.3\linewidth]{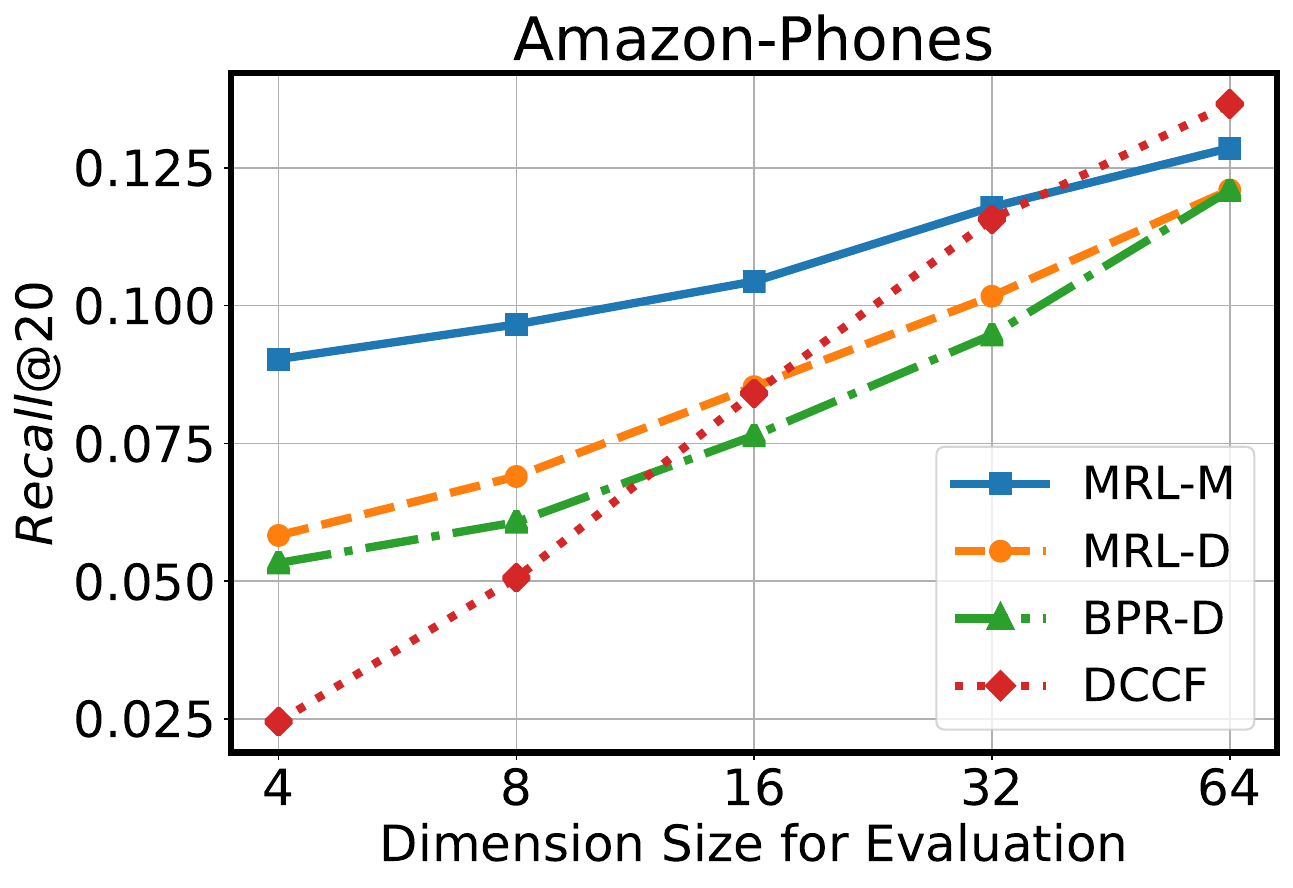} \label{fig:sp}}
  \hspace{\fill}
  \subfigure{\includegraphics[width=0.3\linewidth]{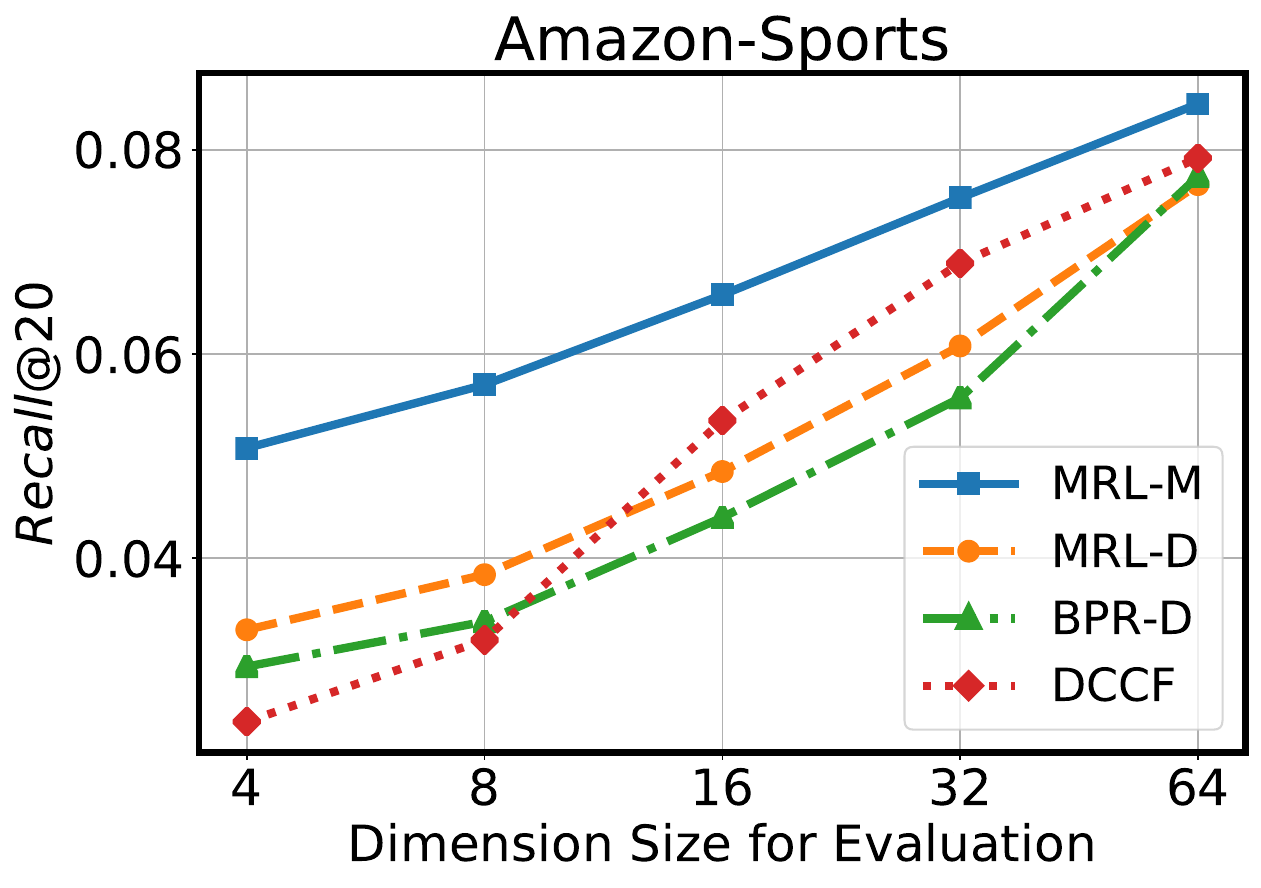} \label{fig:ss}}
  \hspace{\fill}
  \subfigure{\includegraphics[width=0.3\linewidth]{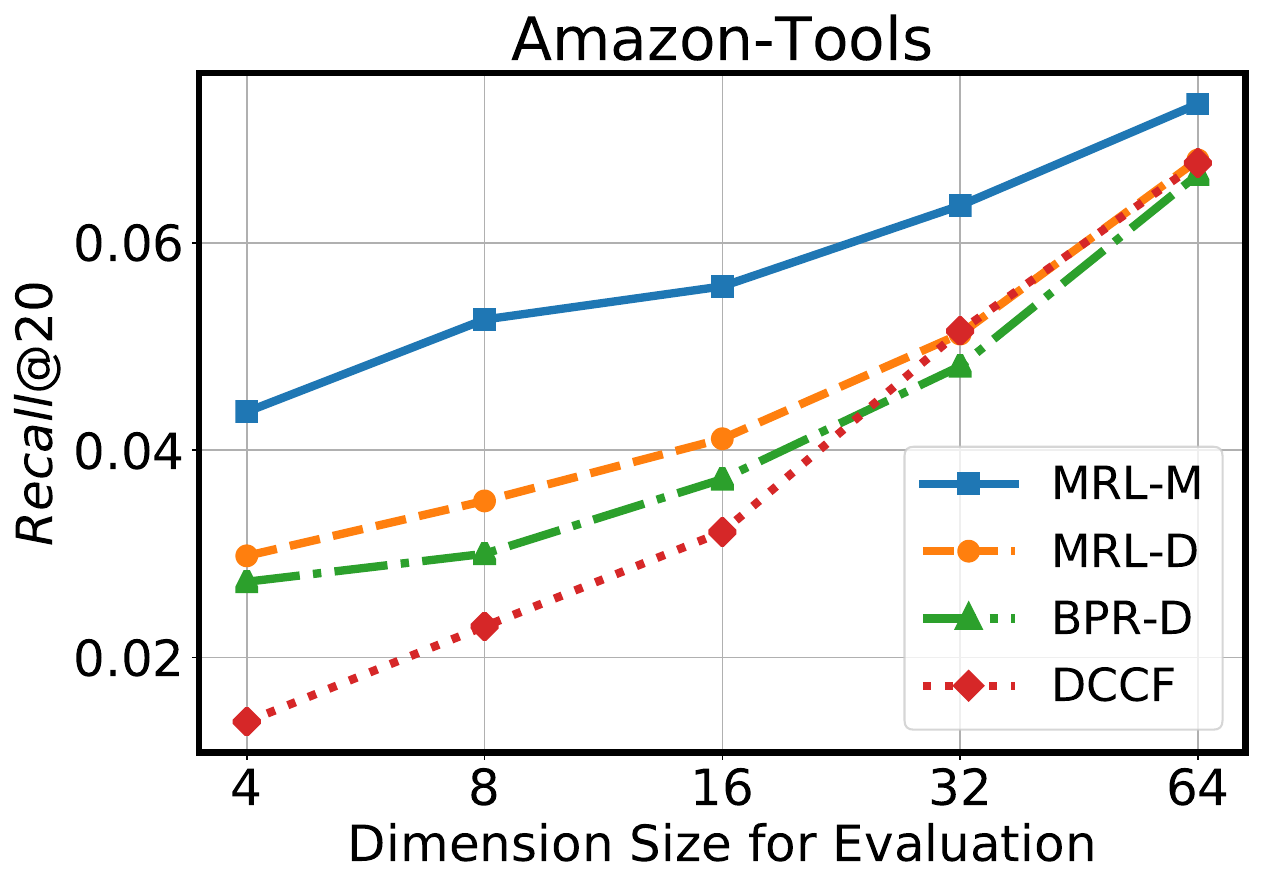} \label{fig:st}}
  \hspace{\fill}
  \caption{$Recall@20$ of MRL-M, MRL-D, BPR-D, and DCCF under different dimension sizes for evaluation.}
  \label{fig:ea}
\end{figure*}

\subsection{Ablation Study (RQ2)}
To validate the consistency of the proposed matryoshka negative sampling (MNS) mechanism with our theoretical analyses and its effectiveness in ensuring matryoshka representation learning with hierarchical structures. We design several variants of our methods, \textit{i.e.,} BPR-M, MRL-D, and MRL-M, and report the performance results of these variants and BPR-D in Table~\ref{tab:ab}. We can draw a few interesting observations as follows:

\noindent \textbf{(i) Comparing BPR-D and BPR-M:}
\begin{itemize}[leftmargin=*]
    \item When directly comparing the performance of BPR-D and BPR-M, we clearly see that BPR-M performs worse than BPR-D on all datasets and across all metrics. This suggests that, contrary to what might be expected, the introduction of the MNS mechanism does not improve performance when in conjunction with the Bayesian personalized ranking (BPR) loss.
    \item A possible explanation for this observation could be that the BPR loss is specifically tailored for learning entangled representations, whereas the MNS mechanism, which was originally designed for learning matryoshka representations, may introduce unnecessary complexities or noise that are not conducive to the entangled representation learning process.
    \item It is worth noting that this observation underscores the importance of carefully matching sampling strategies and loss functions to ensure learning optimal representations.
\end{itemize}

\noindent \textbf{(ii) Comparing BPR-D and MRL-D:}
\begin{itemize}[leftmargin=*]
    \item When comparing the results of BPR-D and MRL-D, our observation reveals that there is little to no discernible difference in performance between BPR-D and MRL-D across all datasets and metrics.
    \item This finding aligns with our theoretical understanding that merely swapping out the loss function to the matryoshka representation learning (MRL) loss without modifying the training triples is insufficient for learning matryoshka representations that exhibit hierarchical structures.
\end{itemize}

\noindent \textbf{(iii) Comparing MRL-D and MRL-M:}
\begin{itemize}[leftmargin=*]
    \item The comparison between MRL-D and MRL-M clearly demonstrates the superiority of MRL-M across all datasets and metrics.
    \item This observation provides strong empirical evidence in support of our hypothesis that integrating the MNS mechanism with the MRL loss significantly boosts the effectiveness of learning matryoshka representations with hierarchical structures.
\end{itemize}

\subsection{Effectiveness Analysis (RQ3)}
Given the absence of ground truth data regarding hierarchical user preferences and item features, conclusively verifying whether MRL-M effectively maps these aspects into learned matryoshka representations accordingly remains challenging. Therefore, as a compromise, we attempt to verify whether each sliced vector in matryoshka representations, which is expected to represent a particular hierarchical level of user preferences or item features, contains meaningful information beneficial for recommendation. Figure~\ref{fig:ea} illustrates the $Recall@20$ results of MRL-M, MRL-D, BPR-D, and DCCF when evaluated using only top-$n$ dimension vectors. Here $n$ takes values from the set of dimension sizes $\mathcal{D}$. We observe that:
\begin{itemize}[leftmargin=*]
    \item MRL-M demonstrates superior performance in most cases, especially under smaller dimension sizes for evaluation. Surprisingly, even when evaluated using only top-8 dimension vectors, MRL-M can attain comparable performance to other methods utilizing top-32 dimension vectors. This observation highlights the remarkable effectiveness of MRL-M in capturing relevant information, even with a significantly reduced vector space.
    \item MRL-D and BPR-D perform similarly under different dimension sizes for evaluation. This observation further proves the correctness of Theorem~\ref{thm:c1}, \textit{i.e.,} the representations learned by $\mathcal{L}_{\mathrm{MRL}}$ and $\mathcal{L}_{\mathrm{BPR}}$ exhibit an identical structure.
    \item DCCF achieves superior performance under large dimension sizes for evaluation, but its performance suffers when evaluated with smaller dimension sizes. This observation further validates that the rigid structure imposed by disentangled representation learning could lead to a skewed understanding of user preferences or item features, resulting in biased representations.
\end{itemize}

\begin{table}[t]
\centering
\caption{Performances of MRL-M under different numbers of hierarchical levels.}
\begin{tabular}{@{}cc*{4}{>{\centering\arraybackslash}p{1cm}}@{}}
\toprule
\multirow{2}{*}{\textbf{Dataset}} & \multirow{2}{*}{\textbf{Metric}} & \multicolumn{4}{c}{\textbf{Hierarchical Level}} \\ \cmidrule(l){3-6} & & $L=2$ & $L=3$ & $L=4$ & $L=5$ \\ \midrule
\multirow{6}{*}{\begin{tabular}[c]{@{}c@{}}Amazon-\\ Phones\end{tabular}}
& $Recall@10$ & 0.0876 & 0.0887 & 0.0884 & 0.0906 \\
& $Recall@20$ & 0.1233 & 0.1271 & 0.1263 & 0.1286 \\ 
& $Recall@50$ & 0.1849 & 0.1909 & 0.1886 & 0.1920 \\ \cmidrule(l){2-6}
& $NDCG@10$   & 0.0576 & 0.0586 & 0.0582 & 0.0601 \\ 
& $NDCG@20$   & 0.0680 & 0.0698 & 0.0692 & 0.0710 \\
& $NDCG@50$   & 0.0824 & 0.0846 & 0.0837 & 0.0858 \\ \midrule
\multirow{6}{*}{\begin{tabular}[c]{@{}c@{}}Amazon-\\ Sports\end{tabular}}
& $Recall@10$ & 0.0565 & 0.0561 & 0.0557 & 0.0582 \\
& $Recall@20$ & 0.0816 & 0.0818 & 0.0822 & 0.0845 \\
& $Recall@50$ & 0.1280 & 0.1301 & 0.1292 & 0.1318 \\ \cmidrule(l){2-6}
& $NDCG@10$   & 0.0375 & 0.0375 & 0.0377 & 0.0388 \\ 
& $NDCG@20$   & 0.0451 & 0.0452 & 0.0457 & 0.0469 \\
& $NDCG@50$   & 0.0563 & 0.0569 & 0.0569 & 0.0581 \\ \midrule
\multirow{6}{*}{\begin{tabular}[c]{@{}c@{}}Amazon-\\ Tools\end{tabular}}
& $Recall@10$ & 0.0492 & 0.0504 & 0.0505 & 0.0509 \\
& $Recall@20$ & 0.0692 & 0.0717 & 0.0719 & 0.0734 \\
& $Recall@50$ & 0.1045 & 0.1084 & 0.1073 & 0.1115 \\ \cmidrule(l){2-6}
& $NDCG@10$   & 0.0330 & 0.0330 & 0.0340 & 0.0339 \\ 
& $NDCG@20$   & 0.0390 & 0.0394 & 0.0403 & 0.0404 \\
& $NDCG@50$   & 0.0474 & 0.0481 & 0.0488 & 0.0491 \\ \bottomrule
\end{tabular}
\label{tab:hl}
\end{table}

\subsection{Parameter Sensitivity (RQ4)}
Finally, we study how the number of hierarchical levels in matryoshka representations affects the recommendation performance. Table~\ref{tab:hl} presents the performance results of MRL-M under different number of hierarchical levels. Specifically, $L=2$ means that we set $\mathcal{D}$ as $\{32, 64\}$, $L=3$ means that we set $\mathcal{D}$ as $\{16, 32, 64\}$, and so forth. From the table, we can observe that compared to $L=1$ (BPR-D in Table~\ref{tab:ab}), the introduction of multiple hierarchical levels in MRL-M can indeed enhance performance. However, as $L$ exceeds 3, the incremental performance improvements achieved by further increasing the number of hierarchical levels become limited.

\section{Related Work}
\textbf{Entangled Representation Learning in Recommendation.}
As the most prevalent representation learning technique, entangled representation learning treats each user preference or item feature uniformly, and diffuses its influence indiscriminately across the entire representation~\cite{WWG22,GCP22,PYL23}. For example, BPR~\cite{RFG09} aims to maximize the discrepancy between the predicted ratings for positive and negative items, and BCE~\cite{HKV08} aims to minimize the errors in predicting binary user-item interactions, to refine user and item representations. UIB~\cite{ZZY22} combines BPR and BCE with learned explicit user interest boundaries as bridges. DirectAU~\cite{WYM22} directly optimizes the alignment and uniformity properties of user and item representations. However, such uniform treatment of entangled representation learning may result in the over-generalization problem, where the primary preference or feature overshadows others.

\noindent \textbf{Disentangled Representation Learning in Recommendation.}
Disentangled representation learning~\cite{CCN23,RXZ23,CCW21,ZWZ22,ZZH20,LGL20} has gained popularity in recent years due to its capacity to alleviate the over-generalization problem by separating user preferences and item features into distinct and non-overlapping clusters. For instance, MacridVAE~\cite{MZC19} introduces variational autoencoders to disentangle user intent at different levels based on user behavior data. DGCF~\cite{WJZ20} performs disentangled representation learning via intent-aware interaction graphs to discover user-item relationships and capture fine-grained user intents. CLSR~\cite{ZGC22} disentangles dynamic long- and short-term user interests with contrastive learning in the sequential recommendation. GDCF~\cite{ZLX22} disentangles the latent intent factors across multiple geometric spaces beyond simple Euclidean space. DICE~\cite{ZGL21} learns different representations with disentanglement between interest and conformity based on the structural causal model. However, current disentangled representation learning approaches encounter challenges due to unrealistic constraints imposed by non-overlapping user preferences and item features. Moreover, their rigid structure often results in an incomplete and biased understanding of the data.

\noindent \textbf{Hard Negative Sampling in Collaborative Filtering.}
Negative sampling is essential for implicit collaborative filtering to provide appropriate negative training signals for representation learning. Contrasting with static negative sampling that primarily aims at identifying effective sampling distributions~\cite{RFG09,CSS17,WVS19}, hard negative sampling methods emphasize the significance of oversampling hard negative items to speed up the training process and find more precise delineations of user preferences~\cite{WYZ17,PC19,ZZH22,SCF23}. For example, as the simplest and most prevalent hard negative sampling method, DNS~\cite{ZCW13} proposes to select uninteracted items with higher prediction scores. SRNS~\cite{DQY20} chooses items with both high prediction scores and high variances as negative items to tackle the false negative problem. AHNS~\cite{LCH24} further proposes to adaptively select negative items with different hardness levels to alleviate both the false positive problem and the false negative problem. DENS~\cite{LCZ23} points out the importance of disentangling item factors in negative sampling and designs a factor-aware sampling strategy to identify the optimal negative item for training. Instead of directly selecting negative samples from uninteracted items, MixGCF~\cite{HDD21} synthesizes harder negative items by mixing information from positive items while ANS~\cite{ZCL23} generates more informative negative items from a fine-granular factor perspective.

\section{Conclusion}
In this paper, we introduce a novel representation learning method MRL4Rec, which aims to capture hierarchical user preferences and item features within matryoshka representations. We first represent user preferences and item features at hierarchical levels in matryoshka representation spaces. We theoretically demonstrate that the construction of training triplets tailored to each hierarchical level is crucial in ensuring precise matryoshka representation learning. Based on the theoretical analyses, we further design a matryoshka negative sampling mechanism. Comprehensive experiments confirm that our method provides a promising new research direction for representation learning to further boost the performance of deep-neural-network-based recommender systems.

\clearpage
\balance
\bibliographystyle{ACM-Reference-Format}
\bibliography{RecSys24}

\end{document}